\def\tinyskip{\vspace{0.0cm plus 0.1cm}}
\newtheorem{theorem}{Theorem}[section]
\newtheorem{claim}[theorem]{Claim}
\newtheorem*{claim*}{Claim}
\newtheorem*{subclaim*}{Subclaim}
\newtheorem*{step*}{Step}
\theoremstyle{definition}
\newtheorem{definition}[theorem]{Definition}
\newtheorem{to-do}[theorem]{Todo}
\theoremstyle{remark}
\newtheorem{remark}[theorem]{Remark}
\newenvironment{enumerate-(a)}{\begin{enumerate}[label={\upshape
(\alph*)}, leftmargin=2pc]}{\end{enumerate}}
\newenvironment{enumerate-(a)-r}{\begin{enumerate}[label={\upshape
(\alph*)}, leftmargin=2pc,resume]}{\end{enumerate}}
\newenvironment{enumerate-(A)}{\begin{enumerate}[label={\upshape
(\Alph*)}, leftmargin=2pc]}{\end{enumerate}}
\newenvironment{enumerate-(A)-r}{\begin{enumerate}[label={\upshape
(\Alph*)}, leftmargin=2pc,resume]}{\end{enumerate}}
\newenvironment{enumerate-(i)}{\begin{enumerate}[label={\upshape
(\roman*)}, leftmargin=2pc]}{\end{enumerate}}
\newenvironment{enumerate-(i)-r}{\begin{enumerate}[label={\upshape
(\roman*)}, leftmargin=2pc,resume]}{\end{enumerate}}
\newenvironment{enumerate-(I)}{\begin{enumerate}[label={\upshape
(\Roman*)}, leftmargin=2pc]}{\end{enumerate}}
\newenvironment{enumerate-(I)-r}{\begin{enumerate}[label={\upshape
(\Roman*)}, leftmargin=2pc,resume]}{\end{enumerate}}
\newenvironment{enumerate-(1)}{\begin{enumerate}[label={\upshape
(\arabic*)}, leftmargin=2pc]}{\end{enumerate}}
\newenvironment{enumerate-(1)-r}{\begin{enumerate}[label={\upshape
(\arabic*)}, leftmargin=2pc,resume]}{\end{enumerate}}
\begin{document}

\author{Sanjay Jain, Bakhadyr Khoussainov, \\
        Philipp Schlicht and Frank Stephan}
\address{Sanjay Jain, School of Computing, National University of Singapore, 
Singapore 117417, Republic of Singapore}
\email{sanjay@comp.nus.edu.sg}

\address{Bakhadyr Khoussainov, Computer Science Department, The
University of Auckland, New Zealand}
\email{bmk@cs.auckland.ac.nz}

\address{Philipp Schlicht, School of Mathematics, University of Bristol, 
University Walk, 
Bristol, 
BS8 1TW, United Kingdom} 
\email{philipp.schlicht@bristol.ac.uk} 

\address{Frank Stephan, Department of Mathematics, National University
of Singapore, Block S17, 10 Lower Kent Ridge Road, Singapore 119076,
Republic of Singapore}
\email{fstephan@comp.nus.edu.sg}

\date{\today}

\title{The isomorphism problem for tree-automatic ordinals with addition}

\begin{abstract}{This paper studies tree-automatic ordinals (or
equivalently, well-founded linearly ordered sets) together with the
ordinal addition operation $+$. Informally, these are ordinals such
that their elements are coded by finite trees for which the linear
order relation of the ordinal and the ordinal addition operation can
be determined by tree automata. 
We describe an algorithm that, given two tree-automatic ordinals with
the ordinal addition operation, decides if the ordinals are isomorphic.}
\end{abstract} 
\maketitle

\section{Motivation of the problem}

\noindent
Delhomm\'e proved that any ordinal presented by finite automata is
strictly less than $\omega^{\omega}$ \cite{D1}. Ordinals presented by
automata are called word-automatic ordinals; precise definitions will
be provided in the next section and can be found in the relevant
literature \cite{BG11, BG04, D1, FT12, FT13, R08, R19}.
Later Khoussainov, Rubin and Stephan proved
that the Cantor-Bendixson rank of any word-automatic ordinal is finite
\cite{KRS05}. In particular, this implies that any word-automatic
ordinal is strictly less than $\omega^{\omega}$. Khoussainov and
Minnes generalized these results by proving that the height of any
word-automatic well-founded partially ordered set is below
$\omega^{\omega}$ \cite{KM10}. The proofs of these results show
that there is an
algorithm that, given a word-automatic linear order, decides  if the
linear order is an ordinal \cite{KRS05}. Moreover, there exists an
algorithm that, given two word-automatic ordinals, decides if the
ordinals are isomorphic. The decidability of the isomorphism problem
for word-automatic ordinals is obtained by extracting the Cantor
Normal Form from the given word-automatic ordinals. 
In contrast, Kuske, Liu and Lohrey proved that the isomorphism
problem for word-automatic linearly ordered sets is undecidable
\cite{Kuske2, Kuske1} and also obtained similar results for
$\omega$-automatic trees and other such structures \cite{Kuske3}.

\tinyskip

The results above naturally lead to the study of ordinals presented by
tree automata (such ordinals we call tree-automatic ordinals).
Delhomm\'e proved that any tree-automatic ordinal is strictly less
than $\omega^{\omega^{\omega}}$ \cite{D1}. Jain, Khoussainov,
Schlicht and Stephan \cite{JKSS16} connected tree-automatic
ordinals with automata working on ordinal words \cite{SS13} and
provided an alternative proof of Delhomm\'e's result. However, in
contrast to word-automatic ordinals, it is unknown if the isomorphism
problem for tree-automatic ordinals is decidable. We address this
isomorphism problem in the current paper.
We prove two main results. First, we show an ordinal $\alpha$ 
together with the ordinal addition operation $+$ is tree-automatic if
and only if $\alpha< \omega^{\omega^{\omega}}$. For the proof, by the
above mentioned result by Delhomm\'e, it suffices to show that all
ordinals $\alpha< \omega^{\omega^{\omega}}$ with the ordinal addition
operation are tree-automatic. We show that the natural
tree-representations of such ordinals preserve tree-automaticity of
the addition operation. We further provide an algorithm that, given
two tree-automatic ordinals with the addition operation, decides if
the ordinals are isomorphic. Just like in the case of word-automatic
ordinals, the proof is based on extracting Cantor Normal Forms from
tree automata that represent ordinals with addition. The Cantor Normal Form is
based on Cantor's result that every ordinal is the sum of a finite descending
list of $\omega$-powers \cite{Ca95,Ca97}.

\section{Basic definitions} 

\noindent
This section gives necessary definitions and background to tree
automata and tree-automatic structures. Let $\{0,1\}$ be the binary
alphabet and $\leq_{p}$ be the prefix order on finite binary strings.
If $x\leq_{p} y$ and $x\neq y$ then we say that $y$ properly extends
$x$. By a tree we mean a finite subset $X$ of $\{0,1\}^{\star}$ such
that (1) $X$ is closed under the prefix relation, (2) for every $x\in
X$ either no $y\in X$ properly extends $x$  or both $x0$ and $x1$
belong to $X$.  

\tinyskip

Let $X$ be  a tree and  $x\in X$. If no string in $X$ properly extends
$x$ then we say that $x$ is a leaf of the tree. In case $x\in X$ and
$x$ is not a leaf then $x$ is called an internal node of $X$; in this
case, the node $x0$ is the left-child of $x$ and $x1$ is the
right-child of $x$. For the internal node $x\in X$, both children of
$x$ belong to $X$. The null string $\lambda$ belongs to any non-empty
tree. The empty string is called the root of the tree. 
Finally, a maximal linearly ordered set of internal nodes of $X$ is
called a branch. 
\tinyskip

For a finite alphabet $\Sigma$, a $\Sigma$-tree is a function $t:
X\rightarrow \Sigma$ where $X$ is a tree. In this case the domain of
$t$,  denoted by $dom(t)$, coincides with the set $X$.  We denote the
set of all $\Sigma$-trees by $T_{\Sigma}$. A $\Sigma$-tree language
(or simply a language) is any set of $\Sigma$-trees. Thus,
$\Sigma$-tree languages are simply subsets of $T_{\Sigma}$. 

\begin{definition}
A tree automaton $\mathcal M$ is a tuple $(S, \delta, I, F)$, where
$S$ is a finite set of states, $\delta:S\times \Sigma \rightarrow
P(S\times S)$ is the transition table, $I\subseteq S$ is the set of
initial states and $F\subseteq S$ that of accepting states. These tree
automata are sometimes called top-down automata.
\end{definition}

\noindent
Given a tree automaton $\mathcal M$ and a $\Sigma$-tree $t$, one
naturally defines the notion of a  run of $\mathcal M$ on the tree
$t$. Namely, a run of $\mathcal M$ on $t$ is a  function $r:
dom(t)\rightarrow S$ such that the following conditions are satisfied: 
\begin{enumerate}
\item The run starts at the root with an initial state. Namely,
$r(\lambda)\in I$.
\item The run is consistent with the transition table. Namely, for all
internal nodes $x\in dom(t)$, if $r(x)=s$ and $t(x)=\sigma$ then
$(r(x0), r(x1))\in \delta (s,\sigma)$.  
\end{enumerate}

\noindent
Note that there could be several runs of $\mathcal M$ on any given
$\Sigma$-tree $t$.   For such a run $r$, if $r(x)\in F$ for all leaves
of the tree $dom(t)$ then we say $r$ is an accepting run of  $\mathcal
M$ on $t$. We say that $\mathcal M$ accepts the tree $t$ if $\mathcal
M$ has an accepting run on $t$.
 Define
$$
L(\mathcal M)=\{t \mid \mbox{the automaton $\mathcal M$ accepts $t$}\}.
$$
We call $L(\mathcal M)$ the language of the automaton $\mathcal M$ or
the language recognised by the automaton $\mathcal M$. 

Note that our restrictive definition of trees above simplifies the
transition table in the previous definition, since one need not
consider the case of nodes with only one direct successor. 

\begin{definition}
A $\Sigma$-tree language $L$ is called regular if there exists an
automaton $\mathcal M$ such that $L$ is the language of the automaton
$\mathcal M$, that is, $L=L(\mathcal M)$. 
\end{definition}

\noindent
It is well-known that the class of regular $\Sigma$-tree languages
forms a Boolean algebra under the set-theoretic boolean operations.
Tree automata also satisfy an important decidability property that
solves the emptiness problem.
Namely,  there exists an algorithm (that runs in linear time on the
sizes of input automata) that, given an automaton $\mathcal M$,
decides if $\mathcal M$ accepts at least one $\Sigma$-tree. 

Note that a unary relation on $T_\Sigma$ is recogised by tree automata
if and only if it is regular. 
In order to define $n$-ary relations that are recognised by tree
automata, we need to define the convolution operation on $n$-tuples of
$\Sigma$-trees. Here we describe the convolution operation for pairs
$(t_1, t_2)$ of $\Sigma$-trees; the convolution of $n$-tuples of
$\Sigma$-trees can easily be defined in a similar way. 

\begin{definition}
Let $\Diamond$ be a symbol that does not belong to $\Sigma$.  Given
$\Sigma$-trees $t_1$ and $t_2$, 
the convolution $conv(t_1,t_2)$ is a function with domain $dom(t_1)
\cup dom (t_2)$ such that for all $x\in dom(t_1) \cup dom (t_2)$ we
have the following properties:
\begin{enumerate}
\item If $x\in domt(t_1) \cap dom (t_2)$ then
$conv(t_1,t_2)(x)=(t_1(x), t_2(x))$.
\item If $x\in dom(t_1)\setminus dom(t_2)$ then 
$conv(t_1,t_2)(x)=(t_1(x), \Diamond)$.
\item If $x\in dom(t_2)\setminus dom(t_1)$ then 
$conv(t_1,t_2)(x)=(\Diamond, t_2(x))$.
\end{enumerate}
\end{definition}

\noindent
This definition allows us to define tree automata recognisable $n$-ary
relations on the set of all 
$\Sigma$-trees:
\begin{definition}
An $n$-ary relation $R$ on the set $T_{\Sigma}$ is tree automata
recognisable (or tree-automatic) if there exists
a tree automaton $\mathcal M$  that recognises the convolution 
$conv(R)$ of the relation $R$, that is:
$$
L(\mathcal M)=\{conv(t_1, t_2, \ldots, t_n) \mid (t_1, t_2, \ldots,
t_n)\in R \}.
$$
\end{definition}

\noindent
Now we connect the notion of tree automata recognisability with
algebraic structures. Recall that a relational algebraic structure is
a first order structure of the form
$$
\mathcal A=(A; R_1^{n_1}, \ldots, R_k^{n_k}),
$$ 
where $A$ is a non-empty set (called the domain of $\mathcal A$) and
for each $i=1, \ldots, k$, $R_i^{n_i}$ is a relation of arity $n_i$
on the domain $A$. These relations are called atomic relations of the
structure. Note that any structure with operation symbols can be
transformed into a relational structure by replacing each atomic
operation with its graph. So, we identify such structures with their
relational counterparts as just described.

In this paper, our interest is in ordinals with addition.
We adopt the convention to identify each ordinal with the set of
its predecessors; ordinals are further (implicitly) identified
with isomorphism types of well-founded linearly ordered sets. The ordinal
$\omega^{\omega}$ with the addition operation can be viewed as the
algebraic structure 
$$
(\omega^{\omega}; \  \leq, +),
$$
where $\omega^{\omega}$ equals the set of all ordinals strictly less than
$\omega^{\omega}$, $\leq$ is the natural linear order relation $\in$
on ordinals
(that is, $\alpha<\beta$ iff $\alpha \in \beta$ for ordinals $\alpha$ and 
$\beta$)
and $+$ is the addition operation of ordinals. Here the addition
operation on an ordinal $\delta$ is identified with its graph
$\{(\alpha, \beta, \gamma) \mid \alpha, \beta, \gamma< \delta \ \& \
\alpha+\beta=\gamma\}$. We note that not every ordinal is closed under
addition. So, $+$ is a binary partial operation.

\tinyskip

The following
definition now connects algebraic structures with tree automata: 

\begin{definition}
An algebraic structure $\mathcal A=(A; R_1^{n_1}, \ldots, R_k^{n_k})$
is tree-automatic if the domain $A\subseteq T_{\Sigma}$ and the atomic
 relations $R_1^{n_1}, \ldots, R_k^{n_k}$ on $A$ are all recognised by
tree automata.  Any tuple of tree-automata that recognise $A$,
$R_1^{n_1}$, $ \ldots$, $R_k^{n_k}$ is called a (tree-automatic)
representation of the structure.
\end{definition}

\noindent
If an algebraic structure is isomorphic to a tree-automatic structure,
then we say that it is tree-automata presentable. Since we are
interested in structures up to isomorphism, we abuse our definition
and often call tree-automata presentable structures tree-automatic
structures. This will be clear from the context. For this paper we use
the following foundational theorem in the theory of automatic
structures \cite{BG00, Ho76, Ho83, KN94}. The proof of the theorem
follows from the closure properties of regular tree-languages and the
decidability of the emptiness problem for tree automata:

\begin{theorem} \label{Thm:Decidability}
There exists an algorithm that given a tree-automatic structure
$\mathcal A$ and a first order formula $\phi(x_1, \ldots, x_n)$
with $n$ variables $x_1$, $\ldots$, $x_n$, constructs a tree automaton
$\mathcal M_{\phi}$ such that $\mathcal M_{\phi}$ accepts a tuple
$(t_1, \ldots, t_n)$ if and only if $\mathcal A\models \phi(t_1,
\ldots, t_n)$. In particular, the first order theory of any
tree-automatic structure is decidable. \qed
\end{theorem}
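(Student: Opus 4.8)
The plan is to prove the theorem by structural induction on the formula $\phi$, constructing for every subformula $\psi(x_1,\dots,x_m)$ an automaton $\mathcal M_\psi$ whose language is exactly
$\{\conv(t_1,\dots,t_m)\mid t_1,\dots,t_m\in A \text{ and } \mathcal A\models\psi(t_1,\dots,t_m)\}$.
Since every construction below is effective, this yields the asserted algorithm; the final statement about the first order theory then follows by applying the construction to a sentence (the case $n=0$) and invoking the decidability of the emptiness problem for tree automata.

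For the base case, atomic formulas built from the relations $R_i$ come equipped with recognising automata by the very definition of a tree-automatic structure, and the equality relation $x_i=x_j$ is recognised by the obvious automaton accepting $\conv(t,t)$ for $t\in A$. Before one can combine formulas, one must also handle the bookkeeping operations on free variables --- permuting them, identifying them, and adjoining dummy variables --- each of which is a regular operation on convolutions: adjoining a dummy coordinate amounts to taking a product with a component ranging over all of $T_\Sigma$, then intersecting with the regular set of well-formed convolutions and with the automaton recognising $A$ in the new coordinate; permuting and identifying coordinates is effected by an alphabet relabelling. As regular tree languages are closed under these operations, we may henceforth assume that all subformulas under consideration share the same tuple of free variables.

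The Boolean connectives are then handled by the Boolean closure of the class of regular tree languages: $\psi_1\wedge\psi_2$ and $\psi_1\vee\psi_2$ correspond to intersection and union of the associated languages, and $\neg\psi$ corresponds to the complement taken \emph{relative} to the regular set of all convolutions $\conv(t_1,\dots,t_m)$ with each $t_i\in A$. The one genuinely tree-specific point is the existential quantifier. Given $\mathcal M_\psi$ for $\psi(x_1,\dots,x_m,x_{m+1})$, I would first erase the last coordinate by applying the tree homomorphism that forgets the $(m+1)$-st component of each label; this preserves regularity, but produces trees that are convolutions $\conv(t_1,\dots,t_m)$ possibly carrying extra subtrees all of whose nodes are labelled by tuples consisting only of $\Diamond$ (these arise exactly when $\dom{t_{m+1}}$ extended beyond $\bigcup_{i\le m}\dom{t_i}$). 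One then applies the operation that deletes every maximal all-$\Diamond$ subtree, which is again realised by a simple automaton transformation, and finally intersects with the automaton for well-formed convolutions of tuples from $A$. The composite is a regular tree language, and it is precisely the language required for $\exists x_{m+1}\,\psi$.

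I expect the main obstacle to be exactly this last trimming step. In the word-automatic setting, projection of a padded convolution is essentially immediate; in the tree setting one must argue with some care that removing the superfluous $\Diamond$-labelled subtrees keeps us inside the class of regular tree languages and that the resulting automaton recognises genuine convolutions rather than their padded variants. Once closure under this erasing operation is in hand, the induction goes through routinely, and for a sentence $\phi$ one decides whether $\mathcal A\models\phi$ by testing whether $L(\mathcal M_\phi)\neq\emptyset$.
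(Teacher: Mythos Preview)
Your proposal is correct and is precisely the standard argument the paper alludes to: the paper does not give a proof at all but simply states that the theorem ``follows from the closure properties of regular tree-languages and the decidability of the emptiness problem for tree automata'' and cites the literature, which is exactly the induction you have written out in detail. Your handling of the projection step (erase the last coordinate, then trim maximal all-$\Diamond$ subtrees) is the right way to make the existential case precise in the tree setting.
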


\noindent
Finally, we refer the reader to the articles of
Blumensath and Gr\"adel \cite{BG00}, Delhomm\'e \cite{D1},
Khoussainov and Minnes \cite{KM10}, Khoussainov and Nerode \cite{KN08},
Kuske \cite{Kuske4} and Rubin \cite{R08,R19} for the background
and open questions in the area of automatic structures. For simple and
non-trivial examples of word-automatic structures we refer to the articles of
Ishihara, Khoussainov and Rubin \cite{IKR02},
Nies \cite{Nies07}, Nies and Thomas \cite{NT08} and Stephan \cite{St15}.

\section{Tree-automatic ordinals with addition}

\noindent 
Our main result in this section is the following theorem. 

\begin{theorem}
Let $\alpha$ be an ordinal such that
$\alpha<\omega^{\omega^{\omega}}$. Then the structure $(\alpha; \ 
\leq, +)$ of the ordinal $\alpha$ together with ordinal addition is a
tree-automatic structure. 
\end{theorem}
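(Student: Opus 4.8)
The plan is to proceed by induction on the exponent in the Cantor normal form, building tree-automatic presentations of $(\alpha;\leq,+)$ for all $\alpha<\omega^{\omega^\omega}$ simultaneously. Since every $\alpha<\omega^{\omega^\omega}$ satisfies $\alpha<\omega^{\omega^k}$ for some finite $k$, and the structures $(\alpha;\leq,+)$ for $\alpha\leq\alpha'$ are substructures of $(\alpha';\leq,+)$ obtained by relativizing to an initial segment (which is a regular set once the presentation is fixed), it suffices to produce, for each finite $k$, a tree-automatic presentation of $(\omega^{\omega^k};\leq,+)$. The base case $k=0$ is $(\omega;\leq,+)$, which is word-automatic and hence tree-automatic by coding strings as ``thin'' trees along the rightmost branch. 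The inductive step is where the work lies: assuming $(\omega^{\omega^k};\leq,+)$ is tree-automatic, I want to present $(\omega^{\omega^{k+1}};\leq,+)$.

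\tinyskip

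\noindent
The representation I would use exploits that every ordinal $\beta<\omega^{\omega^{k+1}}$ has a Cantor normal form $\beta=\omega^{\gamma_1}\cdot c_1+\cdots+\omega^{\gamma_m}\cdot c_m$ with $\gamma_1>\cdots>\gamma_m$, each $\gamma_i<\omega^{\omega^k}$, and each $c_i$ a positive natural number. So $\beta$ is coded by a finite descending sequence of pairs $(\gamma_i,c_i)$ where each $\gamma_i$ is (recursively) a code for an element of $\omega^{\omega^k}$. Concretely, I would attach the codes for $\gamma_1,\dots,\gamma_m$ as subtrees hanging off successive right-children of a spine, with the coefficients $c_i$ written in unary (or via an auxiliary word-automatic gadget) near the corresponding spine node; one must check that the resulting set of codes is a regular tree language, using regularity of the code set at level $k$ and closure of regular tree languages under the relevant operations. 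The order relation $\leq$ is then the lexicographic comparison of these descending sequences: compare $\gamma_1$ to $\gamma_1'$ using the (regular, by induction) order at level $k$; if equal, compare $c_1$ to $c_1'$; if those are equal, recurse on the tails. This lexicographic-on-a-spine comparison is implementable by a tree automaton running the level-$k$ comparison automaton on the hanging subtrees in sequence along the spine, which is the standard way regular relations compose along a branch.

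\tinyskip

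\noindent
The main obstacle is handling $+$, and this is the crux of the theorem. Ordinal addition in Cantor normal form obeys the absorption rule: to form $\beta+\beta'$ where $\beta'=\omega^{\gamma'_1}\cdot c'_1+\cdots$, one deletes from $\beta$ every term $\omega^{\gamma_i}\cdot c_i$ with $\gamma_i<\gamma'_1$, then (if $\gamma_j=\gamma'_1$ for the least surviving index, i.e.\ the first $i$ with $\gamma_i\le\gamma'_1$) merges coefficients $c_j+c'_1$ at that exponent, and appends the rest of $\beta'$. So the automaton checking $\mathrm{conv}(\beta,\beta',\delta)\in{+}$ must: (i) locate the cut point in $\beta$'s spine — the first exponent $\gamma_i\le\gamma'_1$ — which requires running the level-$k$ comparison automaton between each $\gamma_i$ and the single ordinal $\gamma'_1$; (ii) verify the surviving prefix of $\beta$ matches the corresponding prefix of $\delta$; (iii) perform the coefficient addition $c_j+c'_1$ at the merge exponent (word-automatic, hence fine); and (iv) verify the suffix of $\delta$ equals the suffix of $\beta'$. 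The subtlety is synchronization: a top-down tree automaton reads the three spines of $\beta,\beta',\delta$ in parallel via the convolution, but the ``shift'' caused by deleting terms of $\beta$ means the $i$-th subtree of $\delta$ need not align with the $i$-th subtree of $\beta$ or of $\beta'$; one has to guess the cut point nondeterministically and then verify the two aligned stretches (prefix against $\beta$, suffix against $\beta'$) separately. Each verification is a finite composition of the level-$k$ equality/order relation along matched subtrees, which is regular; the guessing is absorbed into the state set. I expect that once the coding conventions are pinned down precisely, all four automata are routine to assemble from the inductive hypothesis and the Boolean/projection closure of regular tree languages, but getting the bookkeeping of the cut-and-merge exactly right — especially the boundary cases where the cut consumes all of $\beta$, or where no merge occurs, or where $\gamma_j>\gamma'_1$ strictly — is the delicate part of the argument.
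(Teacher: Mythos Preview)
Your overall reduction---show $(\omega^{\omega^k};\leq,+)$ is tree-automatic by induction on $k$, then relativise to an initial segment---matches the paper. The gap is in the inductive step, specifically in the representation you choose. You code $\beta<\omega^{\omega^{k+1}}$ by listing the \emph{nonzero} terms of its Cantor normal form along a spine, so spine position $i$ carries the $i$-th exponent $\gamma_i$. When you then form $\delta=\beta+\beta'$, the suffix of $\delta$'s spine (positions $j,j+1,\dots$) must coincide with $\beta'$'s spine (positions $1,2,\dots$); but in the convolution $\mathrm{conv}(t_\beta,t_{\beta'},t_\delta)$ these live at \emph{different} nodes of the underlying tree, and a tree automaton cannot compare a subtree hanging at $0^{j-1}1$ with one hanging at $1$. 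Your phrase ``verify the two aligned stretches separately'' hides exactly this: the suffix stretch is \emph{not} aligned, and no amount of nondeterministic guessing of the cut point lets a top-down automaton shift one spine against another. (A secondary issue: unary coding of the $c_i$ also breaks regularity of coefficient addition, for the same length-mismatch reason, though your hedge ``or via an auxiliary word-automatic gadget'' would cover that.)

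The paper avoids the shift entirely by using a \emph{positional} rather than a sparse encoding: it writes $\beta<\omega^{\omega^{n+1}}$ in base $\omega^{\omega^n}$ as $\sum_{i\leq k}(\omega^{\omega^n})^i\cdot\alpha_i$ with $\alpha_i<\omega^{\omega^n}$, and hangs $t_{\alpha_i}$ at spine node $0^i$. Now spine position $i$ always means ``coefficient of $(\omega^{\omega^n})^i$'', for all three of $\beta,\beta',\delta$. Addition becomes purely local along the spine: at each node the automaton guesses whether a higher nonzero digit of $\beta'$ still lies above; if so the digit of $\delta$ equals that of $\beta'$, if not it equals $\beta_i+\beta'_i$, which is checked by the level-$n$ addition automaton (no carry, since $\omega^{\omega^n}$ is additively indecomposable). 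That alignment is the missing idea; once you switch to the positional base-$\omega^{\omega^n}$ representation, your sketch goes through.
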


\begin{proof}
For the reader and ease of understanding, we provide some background
and intuition for ordinals $\alpha\leq\omega^{\omega^{\omega}}$. These
will also be needed for the next theorem. The ordinal
$\omega^{\omega}$ can be viewed as the sum:
$$
\omega^{\omega}=\omega+\omega^2+\ldots + \omega^n+\ldots.
$$
The ordinals $\omega^{\omega^n}$, where $n>1$,  are defined by
induction through the supremum of the sequence of ordinals 
$$
\omega^{\omega^{n-1}},  \  \ 
\omega^{\omega^{n-1}}\cdot \  \omega^{\omega^{n-1}},  \   \ 
\omega^{\omega^{n-1}}\cdot \  \omega^{\omega^{n-1}} \cdot \ 
\omega^{\omega^{n-1}}, \  \   \   \  \omega^{\omega^{n-1}}\cdot \ 
\omega^{\omega^{n-1}} \cdot \  \omega^{\omega^{n-1}} \cdot \ 
\omega^{\omega^{n-1}}, \ \ldots.
$$ 
So, the ordinal $\omega^{\omega^{\omega}}$ can be viewed as the sum:
$$
\omega^{\omega^{\omega}}=\omega^{\omega} +
\omega^{\omega^2}+\omega^{\omega^3}+\ldots + \omega^{\omega^n}+\ldots.
$$
Let $p(X)$ be a notation for polynomials with non-negative integer
coefficients. We represent these polynomials as 
$$
p(X)= X^na_n+X^{n-1}a_{n-1}+\ldots+ Xa_1+a_0,
$$ 
where \  $a_n>0$ whenever $p(X) \neq 0$.
Here we explicitly wrote the coefficients on the right side of the
variables, since the multiplication and addition operations on
ordinals are not commutative. For any ordinal
$\alpha<\omega^{\omega^{\omega}}$, there are unique polynomials
$p_0(X)$, $\ldots$, $p_k(X)$ 
and integer coefficients $c_0$, $\ldots$, $c_k$ with $c_k>0$  
such that 
\begin{itemize}
\item
$\alpha=\omega^{p_0(\omega)}c_{0}+\omega^{p_1(\omega)}c_{1}+\ldots +
\omega^{p_{k-1}(\omega)}c_{k-1}+ \omega^{p_k(\omega)}c_k$ 
and 
\item $p_0(\omega)> p_{1}(\omega)> \ldots> p_k(\omega)$. 
\end{itemize}
When adding ordinals described in the form above, one takes into
account the following equations from ordinal arithmetic: 
$$
\omega^{\alpha}m + \omega^{\alpha}n=\omega^{\alpha}(m+n), \  \mbox{and
} \ \omega^{\alpha}+\omega^{\beta}=\omega^{\beta},
$$
where $m$ and $n$ are natural numbers and $\alpha< \beta$ are
ordinals. For instance,
$$
(\omega^{\omega^{3}}4+ \omega^{\omega^{2}}7+\omega^{6}3+ \omega^{2}
+1) + (\omega^{\omega^{2}}2+\omega^{6}3+ \omega^{5}+5)=
\omega^{\omega^{3}}4+ \omega^{\omega^{2}}9+\omega^{6}3+ \omega^{5}+5.
$$

\noindent
To prove the desired result, it suffices 
to show that each ordinal $\omega^{\omega^n}$ with the ordinal
addition operation $+$ is a tree-automatic structure. 
This yields a representation of $(\alpha;\ \leq,+)$ for any ordinal
$\alpha<\omega^{\omega^\omega}$, since the class of tree-automatic
structures is closed under products. The proof is by induction on $n$. 

We first explain a tree-automatic presentation of $\omega^{\omega}$.
Note that every non-null ordinal $\alpha< \omega^{\omega}$ can be 
uniquely written as 
$$ 
\omega^{m}b_m+\ldots + \omega^{1} b_1+\omega^0 b_0, 
$$ 
where $b_m>0$ and the coefficients $b_m, \ldots, b_0$ are natural
numbers. The integer $m$ is called the degree of $\alpha$.   We
represent this $\alpha$ as the following $\{0,1\}$-tree $t_{\alpha}$: 
\begin{enumerate} 
\item \label{one} 
The leftmost branch of $t_{\alpha}$ has length $m+1$ and nodes
$v_i=0^i$ for $i\leq m$. 
\item \label{two} 
For each $i$, the rightmost branch containing $v_i$ has length $k_i$
above $v_i$ and nodes $w_j=v_i^\smallfrown 1^j$ for $j\leq k_i$ and
codes the coefficient $b_i\in [2^{k_i},2^{k_i+1})$ in binary format,
beginning with the least-significant bit.
\item 
$dom(t_\alpha)$ does not contain any more nodes than implied by
(\ref{one}) and (\ref{two}), and $t_\alpha(s)=0$ for all $s\in
dom(t_\alpha)$ not mentioned there. 
\end{enumerate}
Let $L_1$ be the set of all $\{0,1\}$-trees  $t_{\alpha}$ that
represent ordinals $\alpha< \omega^{\omega}$. 
It is clear that $L_1$ is a regular tree language. 
It is not too hard to  see that there exist tree automata $\mathcal
M_{\leq}^{(1)}$ and $\mathcal M_{+}^{(1)}$ such that, given trees
$t_{\alpha}$, $t_{\beta}$ and $t_{\gamma}$ representing ordinals
$\alpha, \beta, \gamma < \omega^{\omega}$, the automata $\mathcal
M_{\leq}^{(1)}$ and $\mathcal M_{+}^{(1)}$ verify that $\alpha\leq
\beta$ and $\alpha+\beta=\gamma$.  For instance, $\mathcal
M_{\leq}^{(1)}$ by reading $conv(t_{\alpha}, t_{\beta})$
non-deterministically makes one of the following choices: 
\begin{enumerate} 
\item The automaton guesses and verifies that the degree 
of $\beta$ is greater than that of $\alpha$. 
In this case $\mathcal M_{\leq}^{(1)}$ accept the convolution
$conv(t_{\alpha}, t_{\beta})$. 
\item The automaton guesses and verifies that 
both $\alpha$ and $\beta$ have the same degree, 
and then compares the coefficients
of $\alpha$ and $\beta$ lexicographically. 
\end{enumerate}
Similarly, using the properties of the addition operation stated
above, one can describe the desired automaton  $\mathcal M_{+}^{(1)}$. 

\begin{enumerate} 
\item \label{one'} 
The automaton guesses at each internal node $v_i$ of $dom(t_\alpha)$
whether the Cantor Normal Form of $\beta$ contains a non-vanishing
multiple of $\omega^k$ for some $k>i$. 
\item 
For any $i<k$ such that the guess in (\ref{one'}) is positive, we take
as coefficient of $\omega^i$ in the sum only the one appearing in
the representation of $\beta$;  
when the guess is negative, we take the sum of the coefficients of $\omega^i$
appearing in $\alpha$ and $\beta$.
\end{enumerate} 

\noindent
So, $\omega^{\omega}$ with the addition operation is a tree-automatic
structure. 

Suppose now that we have a regular tree language $L_n$, that represents
all ordinals $\alpha< \omega^{\omega^n}$, and tree automata $\mathcal
M_{\leq}^{(n)}$ and $\mathcal M_{+}^{(n)}$ such that, given trees
$t_{\alpha}$, $t_{\beta}$ and $t_{\gamma}$ representing ordinals
$\alpha, \beta, \gamma < \omega^{\omega^n}$, the automata $\mathcal
M_{\leq}^{(n)}$ and $\mathcal M_{+}^{(n)}$ verify that $\alpha\leq
\beta$ and $\alpha+\beta=\gamma$.  

Elements $\alpha$ of  $\omega^{\omega^{n+1}}$ can be identified with 
finite tuples $(\alpha_0, \ldots, \alpha_k)$, where each 
$\alpha_i$, $i=1, \ldots,k$,  is an element of  $\omega^{\omega^n}$. 
Intuitively, the tuple $(\alpha_0,\ldots,\alpha_k)$ represents
the ordinal 
$(\omega^{\omega^n})^k \times (\alpha_k)+
(\omega^{\omega^n})^{k-1} \times \alpha_{k-1}+\ldots+
(\omega^{\omega^n}) \times \alpha_{1}+
(\omega^{\omega^n})^0 \times \alpha_{0}$.
The order between such tuples is given by length-lexicographic order
of $(\alpha_k,\ldots,\alpha_0)$, when $\alpha_k$ is non-zero or $k=0$. 
We represent these tuples $\alpha$ as the following binary tree:
\begin{enumerate}
\item 
The leftmost branch of $t_{\alpha}$ has length $k+1$ and nodes
$v_i=0^i$ for $i\leq k$. 
\item For each $i\leq k$,  the right subtree of $v_i$ is a copy of 
$t_{\alpha_i}$ with its root copied to the right child of $v_i$. 
\item 
$dom(t_\alpha)$ does not contain any more nodes than implied by
(\ref{one}) and (\ref{two}) and $t_\alpha(s)=0$ for all $s\in
dom(t_\alpha)$ not mentioned there. 
\end{enumerate} 

\noindent
Let $L_{n+1}$ be the tree language consisting of trees as we described
above. The tree language $L_{n+1}$ is a regular tree language. So, the
trees from $L_{n+1}$ represent elements of the ordinal
$\omega^{\omega^{n+1}}$. Now, using the tree automata $\mathcal
M_{\leq}^{(n)}$ and $\mathcal M_{+}^{(n)}$  as subroutines, it is not
too hard to construct tree automata $\mathcal M_{\leq}^{(n+1)}$ and
$\mathcal M_{+}^{(n+1)}$ that recognise the order relation and the
addition  operation of the ordinal $\omega^{\omega^{n+1}}$. The proof
is similar to the case of  $\mathcal M_{\leq}^{(1)}$ and  $\mathcal
M_{+}^{(1)}$ above.
\end{proof}

\begin{remark}
The anonymous referees asked whether there exist ordinals $\alpha$
for which $(\alpha;\leq,+)$ is not automatic while $(\alpha;\leq)$ is
automatic. Of course, if there are $\beta,\gamma < \alpha$ with
$\beta+\gamma \geq \alpha$ then the ordinal $\alpha$ can treat
the addition only as an automatic relation $\{(\beta,\gamma,\delta):
\beta,\gamma,\delta < \alpha \wedge \beta+\gamma = \delta\}$
and not as a function, as the latter would have undefined
places in the case that the sum exceeds $\alpha$. For these
ordinals, one has that if $(\alpha;\leq)$ has an automatic presentation
then $(\alpha;\leq,+)$ has also an automatic presentation, which might
be different from the previous one, as
seen above. However, if $\omega \leq \alpha < \omega^{\omega^\omega}$,
there is a further presentation where $(\alpha;\leq)$ is automatic but
the addition not. The main idea would be to code the coefficients of
the Cantor normal form not in binary as done above, but in unary;
then there are trees where the height of the sum is approximately
twice that of the original ordinals and an easy application of the
pumping lemma shows that such a function cannot be tree-automatic.
The natural numbers coded in unary are a well-known example of a
word-automatic structure where order and remainders are automatic
while the addition is not.
\end{remark}

\section{Deciding the isomorphism problem}

\noindent
The goal of this section is to prove the following decision theorem
that solves the isomorphism problem for tree-automatic ordinals with
the addition operation. 
For this, we need to handle addition on sets which are not closed
under addition.
Therefore we recall that $(\alpha;+,<)$ is a tree-automatic ordinal
with addition
iff $(\alpha;<)$ is a tree-automatic ordinal and $+$ is a partial
tree-automatic function with tree-automatic domain such that for $\beta,\gamma
\in \alpha$, $\beta+\gamma$ is defined and takes as value the ordinal sum of
$\beta$ and $\gamma$ iff that sum is a member of $\alpha$, that is, iff it is
strictly less than $\alpha$.

\begin{theorem}
There exists an algorithm that, given two tree automatic ordinals with
the addition operation, decides if the ordinals are isomorphic. 
\end{theorem}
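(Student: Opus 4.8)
The plan is to reduce the isomorphism problem to the extraction of a Cantor Normal Form from each tree-automatic presentation, mirroring the word-automatic case but using the addition operation as the essential tool that tree automata alone do not provide. Given a tree-automatic ordinal $(\alpha;\leq,+)$, I would first use Theorem~\ref{Thm:Decidability} to decide, uniformly, first-order properties of the presentation: in particular one can test whether $\alpha$ is a successor or a limit, locate the element $0$ and the successor function, and decide for any definable element whether it is additively indecomposable (i.e.\ an $\omega$-power), since ``$x$ is indecomposable'' is the first-order statement that $\beta+\gamma<x$ whenever $\beta,\gamma<x$ (being careful, as the last paragraph of the excerpt stresses, that $+$ is only a partial function and that on tuples exceeding $\alpha$ the sum is undefined rather than wrong). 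The key structural fact is Cantor's theorem that every ordinal is a finite strictly descending sum of $\omega$-powers; so $\alpha$ is determined up to isomorphism by the finite descending list of exponents occurring in its own Cantor Normal Form, and each such exponent is again an ordinal $<\omega^{\omega^\omega}$ by Delhomm\'e's bound, hence by the previous theorem itself presentable with addition.

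The central step is therefore an effective procedure that, from a tree-automatic presentation of $(\alpha;\leq,+)$, computes a finite description of $\alpha$ as a term in the ordinal notation system built from $0$, $+$, and $\omega$-exponentiation (equivalently, a nested Cantor Normal Form, which since $\alpha<\omega^{\omega^\omega}$ is a finite object: a polynomial-in-$\omega$ tower of depth at most three with natural-number coefficients, as displayed in the proof of the previous theorem). I would do this by recursion on Cantor-Bendixson-type rank. First decide whether $\alpha<\omega$, which is first-order (``there is a largest element and iterating predecessor reaches $0$ in boundedly many steps'' fails for infinite $\alpha$; more simply, $\alpha<\omega$ iff $\alpha$ is finite, and finiteness of an automatic linear order with a definable successor is decidable by testing for the existence of arbitrarily long $\leq$-chains via the automata); if so, read off the (finite) size. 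Otherwise, using $+$, one can define within the structure the ``leading $\omega$-power'' $\omega^{p_0(\omega)}$ as the supremum of the additively indecomposable elements, and the coefficient $c_0$ as the largest $c$ such that $\omega^{p_0(\omega)}\cdot c<\alpha$ — both definable with parameters because multiplication by a fixed natural number is iterated addition, hence first-order for each fixed value, and the relevant values are bounded. Subtracting the leading term (again a first-order operation, since left-subtraction of a definable prefix is definable) yields a tail ordinal $\alpha'<\omega^{p_0(\omega)}$ which is again tree-automatic with addition via a relativised presentation obtained by the closure properties behind Theorem~\ref{Thm:Decidability}; recursing on $\alpha'$ and, in parallel, on the exponent $p_0(\omega)$ (which is itself an ordinal $<\omega^{\omega^\omega}$, presentable with addition by restricting to the indecomposables and their logarithms), the recursion terminates because each step strictly decreases a well-founded measure — the Cantor-Bendixson rank, which is finite for tree-automatic ordinals by the kind of argument of Khoussainov--Rubin--Stephan or directly because the notation is a finite term. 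Finally, once both input ordinals are presented by such finite normal-form terms, isomorphism is decided by syntactic comparison of the two terms (after putting each in canonical descending form), since the Cantor Normal Form is unique.

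I expect the main obstacle to be the second step: turning ``$+$ is available as a tree-automatic relation'' into an actual effective extraction of the coefficients and nested exponents, while respecting that $+$ is partial and that the ambient automaton never sees ordinals $\geq\alpha$. Concretely, the leading exponent $p_0(\omega)$ need not be an element of $\alpha$ at all (if $\alpha=\omega^{p_0(\omega)}$ is itself an $\omega$-power), so one cannot simply ``name'' it inside the structure; instead one must show that the \emph{isomorphism type} of the hierarchy of indecomposable elements below $\alpha$, ordered by $\leq$ and equipped with the induced partial addition, is again tree-automatic and recurse on \emph{that}, proving along the way that this derived structure is exactly $(p_0(\omega)+1;\leq,+)$ or $(p_0(\omega);\leq,+)$ as appropriate. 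Making this relativisation uniform and checking it bottoms out correctly at the base case $\alpha<\omega$ — where one must still decide finiteness and compute the exact finite value from the automata — is the technical heart; everything after the normal-form extraction is a routine syntactic equality test, and everything before it (deciding successor/limit, indecomposability, bounded multiplication, and prefix-subtraction) is immediate from Theorem~\ref{Thm:Decidability}.
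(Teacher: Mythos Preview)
Your overall architecture matches the paper's: use additive indecomposability (first-order in $\leq,+$) to isolate the $\omega$-powers, peel off the leading term, recurse on the tail $\alpha'$, and finally compare Cantor Normal Forms syntactically. The recursion on the tail and the computation of the natural-number coefficients by iterated addition are fine and are exactly what the paper does in its Claim~\ref{compute Cantor normal formb}.

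The genuine gap is in your treatment of the \emph{exponent}. You propose to recurse on $p_0(\omega)$ by passing to the set of indecomposable elements below $\alpha$ ``equipped with the induced partial addition'' and asserting that this derived structure is $(p_0(\omega);\leq,+)$ (or $(p_0(\omega)+1;\leq,+)$). That assertion is false. On indecomposables the inherited $+$ satisfies $\omega^a+\omega^b=\omega^{\max(a,b)}$ when $a\neq b$ and is undefined when $a=b$ (since $\omega^a\cdot 2$ is not indecomposable); it does \emph{not} compute $a+b$. What would transport ordinal addition on exponents is multiplication ($\omega^a\cdot\omega^b=\omega^{a+b}$), which you are not given. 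So your recursion on the exponent cannot be fed a tree-automatic copy of $(p_0(\omega);\leq,+)$, and the scheme stalls at exactly the point you flagged as the ``main obstacle''. (A side remark: the Cantor--Bendixson rank of a tree-automatic ordinal is in general \emph{not} finite --- that is a word-automatic fact --- so that clause in your termination argument should be dropped in favour of the ``finite term'' reason you also give.)

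The paper sidesteps this cleanly and this is the idea you are missing: one does not need addition on the exponent at all. The set $P_\alpha$ of additively closed ordinals $\leq\alpha$ is a tree-automatic \emph{linear order} whose order type is the leading exponent, and since $\alpha<\omega^{\omega^\omega}$ this order type is below $\omega^{\omega}$. For ordinals below $\omega^\omega$ the Cantor Normal Form can be read off from the order alone, by iterating the first-order-definable passage to the suborder of limit points (each iteration drops the degree by one and spits out one natural-number coefficient), exactly as in the word-automatic case. Thus only a single level of ``taking exponents'' is needed, and that level is handled by order-theoretic means, not by a further recursive call requiring $+$.
\qed
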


\begin{proof}
Recall that, by Delhomm\'e's result mentioned in the introduction,  
if $(\alpha; \leq, +)$ is a tree-automatic ordinal, then $\alpha <
\omega^{\omega^{\omega}}$. We will be using this fact implicitly. The
following well-known claim (see e.g. \cite{J03,S58}) shows that we can use
the addition operation $+$ to identify powers of the ordinal $\omega$. 
For the present work, we only consider such powers of the ordinal
$\omega$ which are of the form $\omega^{p(\omega)}$, where
$p(X)$ is a polynomial; larger powers of $\omega$ like
$\omega^{\omega^\omega}$ do not need to be considered, as they
are not tree-automatic.

\begin{claim} \label{powers of omega} 
An ordinal $\beta<\omega^{\omega^{\omega}}$ is closed under
the addition operation $+$ if and only if
$\beta$ is a power of the ordinal $\omega$. 
\end{claim}

\noindent
Indeed, if  $\beta$ is of the form $\omega^{p(\omega)}$, where $p(X)$
is a polynomial, then it is easy to check that $\beta$ is closed under
the addition operation $+$. Otherwise, $\beta$ is of the form
$$
\beta=\omega^{p(\omega)}c+\beta',
$$
where $0<\beta'< \omega^{p(\omega)}$ and $c>0$, or $\beta'=0$ and $c>1$. So,
$\omega^{p(\omega)}<\beta$ yet $\omega^{p(\omega)} (c+1)\geq\beta$.
Therefore, $\beta$ is not closed under the addition operation. 

\tinyskip

So, let $(\alpha; \  \leq, +)$ be a tree-automatic ordinal
with the addition operation $+$. Consider the following set:
$$
P_{\alpha}=\{\beta \leq \alpha \mid \beta\neq0\ \&\ \forall \gamma
\forall \gamma' (\gamma<\beta \ \& \ \gamma'< \beta \rightarrow
\gamma+\gamma' \mbox{ is defined 
and } \gamma+\gamma'<\beta)\}. 
$$
The set $P_{\alpha}$ is a regular tree language by Theorem
\ref{Thm:Decidability} (since it is first order definable) and the
order $\leq$ restricted to $P_{\alpha}$ is an ordinal; we will
identify $P_\alpha$ with this ordinal. 
Our second claim is the following:

\begin{claim} \label{bound for Palpha} 
The ordinal $P_{\alpha}$ is a tree-automatic ordinal strictly less
than $\omega^{\omega}$.
\end{claim}

\noindent
To see this, use that the set $P_\alpha$ consists of the powers of
$\omega$ less or equal to $\alpha$ by Claim \ref{powers of omega}.
If the ordinal $P_\alpha$ equals $\gamma$, then $\omega^\gamma\leq\alpha$. 
Since $\alpha<\omega^{\omega^\omega}$, this implies $\gamma<\omega^\omega$. 

\begin{claim} \label{compute Cantor normal forma} 
Given a tree-automatic ordinal $\alpha$, we can effectively compute
the Cantor Normal Form of the ordinal $P_{\alpha}$.
\end{claim}

\noindent
Here our argument is the same as in the word-automatic case for
ordinals \cite{KRS05}. 

By Theorem \ref{Thm:Decidability}, there is an algorithm that, when
given as input any tree-automatic presentation of $(\alpha;+,\leq)$,
computes a tree-automatic presentation of the ordinal $(P_\alpha;\leq)$. 
Let $\gamma_0$ denote this ordinal. It is strictly less than
$\omega^\omega$ by Claim \ref{bound for Palpha}. 

We will iterate the following step, beginning with $k=0$, until the
resulting ordinal is the empty set.
First we compute the number $b_k$ of steps which one needs to remove
the largest element of $\gamma_k$ until the resulting ordinal is a limit
ordinal, that is, has no largest element. Next we let $\gamma_{k+1}$ be the
subset of $\gamma_k$ of all limit ordinals in $\gamma_k$, that is, of all
ordinals $\delta$ which satisfy that there are smaller ordinals than $\delta$,
but among those there is no largest one. Now we let $k=k+1$ and iterate this
process.

For each $k$ it holds that $\gamma_k = \omega \cdot \gamma_{k+1} +b_k$.
Thus, in the Cantor Normal Form, the degree of $\gamma_{k+1}$ is one below
that of $\gamma_k$ and $b_k$ is the last constant part of the Cantor Normal
Form of $\gamma_k$. Iterating this gives that the Cantor Normal Form of
$P_\alpha$ is $\omega^h\cdot b_h+\ldots+\omega\cdot b_1+b_0$ 
and $h$ is the least value where $\gamma_{h+1}$ is the empty set.

Since the procedure is effective by Theorem \ref{Thm:Decidability},
this proves Claim \ref{compute Cantor normal forma}. 

\begin{claim} \label{compute Cantor normal formb} 
Given a tree-automatic ordinal with addition $(\alpha;\leq,+)$, 
we can effectively compute a tree-automatic ordinal with addition
$(\alpha';\leq,+)$
such that $\alpha = \omega^{P_\alpha}+\alpha'$ and this procedure allows
to compute the Cantor Normal Form of $\alpha$.
\end{claim}

\noindent
The ordinal $\omega^{P_{\alpha}}$ is equal to $\alpha$ in the case
that $\alpha$ is closed under $+$ and in this case, $\alpha' = 0$.
Otherwise $\omega^{P_{\alpha}}$
is the smallest ordinal $\beta \in \alpha$ such that there is no
$\gamma$ with $\beta < \gamma < \alpha$ which is closed under $+$;
this $\beta$ exists by the well-orderedness of $\alpha$.
Now we define $\alpha' = \{\gamma \in \alpha: \beta + \gamma$ is
defined and in $\alpha\}$
and inherit the addition from $\alpha$ to $\alpha'$ as follows:
for $\gamma,\delta \in \alpha'$, we check whether $\gamma+\delta$ and
$\beta+(\gamma+\delta)$ are both defined in $(\alpha;\leq,+)$; if so,
then we let $\gamma+\delta$ have in $(\alpha';\leq,+)$ the same value
as in $(\alpha;\leq,+)$, else we let $\gamma+\delta$ be undefined
in $(\alpha';\leq,+)$. We note that
$\alpha' < \alpha$, as $P_\alpha$ is the exponent of the highest
$\omega$-power with nonzero coefficient in the Cantor Normal Form
of $\alpha$ and this coefficient is then in $\alpha'$ at least by one smaller.

Now we let $\alpha_0 = \alpha$ and $\ell=0$.
While $\alpha_\ell$ is not an $\omega$-power,
we update $\alpha_{\ell+1} = \alpha_\ell'$ and $\ell = \ell+1$.
Now let $\ell$ be the final value of this process and
$\alpha_0,\alpha_1,\ldots,\alpha_\ell$ as defined on the way.
It is easy to see by induction that the ordinals
in this sequence form a strictly descending chain and that
$$
   \alpha = \sum_{k=0}^\ell \omega^{P_{\alpha_k}} 
  = \omega^{P_{\alpha_0}}+
   \omega^{P_{\alpha_1}}+\ldots+\omega^{P_{\alpha_\ell}}.
$$
This is just due to the fact that for the Cantor Normal Form 
$\sum_{k=0}^{h} \omega^{\beta_k}$ of $\alpha$, by induction
for every $k$ the Cantor Normal Form of $\alpha_k$ equals 
$\sum_{k'=k}^h \omega^{\beta_{k'}}$ and thus
$P_{\alpha_k} = \beta_k$. It follows that $\ell = h$,
since $h$ is least such that $\omega^{\beta_h}$ is an $\omega$-power. 
We can effectively determine $\ell$ and the Cantor Normal Form of
all $P_{\alpha_k}$ by Claim \ref{compute Cantor normal forma}. 
We can therefore also effectively determine the 
Cantor Normal Form of $\alpha$ by counting equal $\omega$-powers in
the last equation. 

\begin{claim}
The isomorphism problem of tree-automatic ordinals with addition is decidable.
\end{claim}

\noindent
This follows from Claim \ref{compute Cantor normal formb}. 
Suppose that two tree-automatic ordinals with addition
$(\alpha;\leq,+)$ and $(\beta;\leq,+)$ are given. Using the process above, 
we produce the Cantor Normal Forms for both of these ordinals. 
The ordinals are isomorphic if and only if the two Cantor Normal Forms
produced coincide. 
\end{proof}

\noindent
{\bf Acknowledgments.} The authors would like to thank Dietrich Kuske
for very helpful correspondence, Peter Koepke for discussions and
the anonymous referees for their helpful comments. Part of
this work was done while F.~Stephan visited the University of Bonn during
his sabbatical leave and while P.~Schlicht was working there.
S.~Jain was supported in part by NUS grant C252-000-087-001;
B.~Khoussainov was supported in part
by the Marsden Fund grant of the Royal Society of New Zealand;
S.~Jain, B.~Khoussainov and F.~Stephan have been supported in
part by the Singapore Ministry of Education Academic Research Fund Tier 2
grant MOE2016-T2-1-019 / R146-000-234-112; furthermore,
P.~Schlicht was supported by a Marie Sk\l odowska-Curie Individual
Fellowship with number 794020.

\end{document}